\newcommand{\ama}{\[ \begin{aligned}}
\newcommand{\ema}{\end{aligned} \]}
\newcommand{\set}[1]{\left\{#1\right\}}
\newcommand{\R}{\mathbb{R}}
\def\eg{{e.\,g.}}
\def\ie{{i.\,e.}}
\begin{document}

 %%%%%%%%%
 % Title %
 %%%%%%%%%

\title{A tight bound on the speed-up through storage for quickest multi-commodity flows\thanks{Supported by the DFG Priority Programme 1307 ``Algorithm Engineering'' (grant SK~58/6-3) and the DFG Research Center \textsc{Matheon} ``Mathematics for key technologies'' in Berlin.} }

\titlerunning{A tight bound on the speed-up through storage for quickest multi-commodity flows}

\author{Martin Gro\ss{}\inst{1} \and Martin Skutella\inst{1}} %\footnote{\url{martin.skutella@tu-berlin.de}}}

\institute{Fakult\"at II -- Mathematik und Naturwissenschaften\\ Institut f\"ur Mathematik, Sekr.~MA 5-2\\Technische Universit\"at Berlin, Stra{\ss}e des 17.~Juni 136\\ 10623 Berlin, Germany}

 %\cortext[cor1]{Corresponding Author: Martin Gro\ss{}, Fakult\"at II -- Mathematik und Naturwissenschaften, Institut f\"ur Mathematik, Sekr. MA 5-2, Technische Universit\"at Berlin, Stra{\ss}e des 17. Juni 136, 10623 Berlin, Germany, phone: +493031427448} 

\maketitle

\begin{abstract}
Multi-commodity flows over time exhibit the non-intuitive property that letting flow wait can allow us to send flow faster overall. Fleischer and Skutella (IPCO~2002) show that the speed-up through storage is at most a factor of~$2$, and that there are instances where the speed-up is as large as a factor of~$4/3$. We close this gap by presenting a family of instances for which the speed-up factor through storage converges to~$2$.\\

\textbf{Keywords}: dynamic flow, flow over time, quickest flow, holdover, multi-commodity, storage, waiting
\end{abstract}

 %%%%%%%%%%%%%%%%
 % Introduction %
 %%%%%%%%%%%%%%%%
 
\section{Introduction}
 
Multi-commodity flows over time are a useful tool for modelling problems that involve non-instantaneous transportation through some kind of network. Some recent examples for such problems include aircraft routing~\cite{RoyThomlin07}, cloud computing~\cite{AgapiEtAl09,SoudanPrimet09}, road traffic networks~\cite{KoehlerMoehringSkut09}, disaster management~\cite{HanGuanShi11,OsmanEtAl09}, evacuations~\cite{LahmarAssavapokeeArdekani06}, logistics~\cite{AndreattaEtAl09,HernandezPeetaKalafatas11}, packet routing~\cite{PeisWiese11} and shared-ride systems~\cite{BraunWinter09}.
 
During transportation from a source to a destination, depending on the particular application, making stops before reaching the destination might be feasible or not. For example, if the underlying network is a street network, we can only stop if parking space is available. Similarly, storage of data is possible in telecommunication networks only if buffers are present. Since keeping buffers or parking spaces available costs money, the question arises how much worse the situation becomes if we forbid storage compared to the setting where storage is allowed. We will refer to making stops during transportation as \emph{storage} -- this is sometimes referred to as \emph{holdover} or \emph{waiting} as well.

\section{Quickest multi-commodity flows}
 
We start by defining the \emph{quickest multi-commodity flow problem}. For a more detailed introduction to the area of flows over time as well as for further pointers to the literature, we refer to~\cite{Skutella09,PhDThesisGross14}.%\noteSk{Bitte BibTeX-Eintrag deiner Diss einf\"ugen.}

We consider a network given by a directed graph~$G=(V,A)$ with a set of nodes~$V$ and a set of arcs~$A$. Each arc~$a \in A$ has a capacity~$u_a\geq0$ specifying the maximum rate at which flow may enter it at any point in time, and a transit time~$\tau_a\geq0$ indicating how long flow needs to traverse from its start to its end node. Furthermore, we are given a set of~$k$ commodities $K=\set{0,1,\dots,k-1}$ with source nodes~$s_i\in V$, sink nodes~$t_i\in V$, and demands~$d_i\geq0$ for every commodity~$i\in K$. 
 
We are interested in finding a quickest flow, that is, a flow over time that fulfills all supplies and demands as quickly as possible, \ie, within minimum time horizon~$T$. More formally, a \emph{multi-commodity flow over time} $f$ with \emph{time horizon~$T>0$} is a function $f: (A \times K) \to ([0,T) \to \R_{\geq0})$ that assigns a Lebesque-integrable flow rate function $f^i_a: [0,T) \to \R_{\geq0}$ to every arc~$a\in A$ and every commodity~$i \in K$, subject to \emph{capacity} and \emph{flow conservation} constraints that we specify below.
 
The value~$f^i_a(\xi)$ determines the \emph{flow rate} of commodity~$i$ into arc~$a$ at time~$\xi$. The flow rate specifies, as its name indicates, the rate at which the flow is entering the arc. This could for example be the number of cars entering a road per time unit at a given point in time. Due to the transit times, flow entering an arc~$a = (v,w)$ at time $\xi$ will arrive at node~$w$ only at time $\xi+\tau_a$. While these flow rate functions are only defined on~$[0,T)$, for the sake of convenience we set~$f^i_a(\theta) := 0$ for~$\theta\in\R\setminus[0,T)$, $i\in K$, $a\in A$. 

We can now specify the constraints on the flow rates. \emph{Capacity constraints} make sure that the flow rates assigned to the commodities never violate the arc capacities:
\begin{align}\label{eq:capacities}
\sum_{i\in K} f^i_a(\theta) \leq u_a\qquad \text{for all $a \in A$, $\theta\in[0,T)$.}
\end{align}
\emph{Flow conservation constraints} ensure that flow of a commodity can only emerge from its corresponding source node:
\begin{align}\label{eq:weak-flow-conservation}
\begin{split}
\sum_{a = (\cdot,v) \in A} \int_0^{\theta-\tau_a} f^i_a(\xi)\ d\xi  - \!\!\!\!\sum_{a = (v,\cdot) \in A} &\int_0^\theta f^i_a(\xi)\ d\xi \geq 0\\
&\text{for all $i\in K$, $v\in V\setminus\{s_i\}$, $\theta\in[0,T)$.}	
\end{split}
\end{align}
Finally, in the end, at time~$T$, all flow must have reached its corresponding sink node:
\begin{align}\label{eq:fulfill-demands}
\begin{split}
\sum_{a = (\cdot,v) \in A} \int_0^{T-\tau_a} f^i_a(\xi)\ d\xi  - \!\!\!\!\sum_{a = (v,\cdot) \in A} \int_0^T f^i_a(\xi)\ d\xi =&
\begin{cases}
d_i & \text{if $v=t_i$,}\\
-d_i & \text{if $v=s_i$,}\\
0 & \text{otherwise,}	
\end{cases}\\
&\qquad\qquad\text{for all $i\in K$.}
\end{split}
\end{align}
A feasible solution to the quickest multi-commodity flow problem is a flow over time~$f$ with time horizon~$T$ that satisfies constraints~\eqref{eq:capacities}, \eqref{eq:weak-flow-conservation}, and~\eqref{eq:fulfill-demands}. The objective is to minimize the time horizon~$T$.

While for the case of a single commodity the quickest flow problem can be solved efficiently (cf.~\cite{FordFulkerson58,BurkardDlaskaKlinz93}), the multi-commodity case is weakly NP-hard~\cite{HallHipplerSkutella07}. On the positive side, there is a fully polynomial time approximation scheme (FPTAS)~\cite{FleischerSkutella02,FleischerSkutella07}.

Notice that the flow conservation constraints~\eqref{eq:weak-flow-conservation} allow flow to arrive at an intermediate node~$v$, wait there for some time, and continue onwards (\ie, the term on the left hand side of~\eqref{eq:weak-flow-conservation} might be positive at times). If storage of flow is prohibited at intermediate nodes, we can simply enforce equality in~\eqref{eq:weak-flow-conservation} for~$v\in V\setminus\{s_i,t_i\}$ at all times~$\theta$. Flows over time obeying these \emph{strict flow conservation} constraints are called \emph{flows over time without (intermediate) storage}. 

For the case of a single commodity, it follows from the work of Ford and Fulkerson~\cite{FordFulkerson58} that there is always a quickest flow that does not make use of intermediate storage. On the other hand, Fleischer and Skutella~\cite{FleischerSkutella02,FleischerSkutella07} present an instance of the quickest multi-commodity flow problem where prohibiting intermediate storage increases the optimal time horizon by a factor of~$4/3$. On the positive side, they show that, for any instance, the optimal time horizon without storage is at most a factor of~$2$ larger than the optimal time horizon with storage. Since then it has been an open problem to close this gap (see, \eg, Exercise~10 in~\cite{Skutella09}). In the next section we present a family of instances for which the ratio between the optimal time horizon without and with storage, respectively, converges to~$2$. That is, the speed-up through storage can be as large as a factor of~$2$ and this bound is tight.

\section{Instances with a large speed-up through storage}

Before we give a formal definition of our family of instances, we first provide an intuitive description; see also Figure~\ref{figure:network}. Imagine a single lane roundabout traffic packed with cars where each car wants to make one full turn on the roundabout before leaving it. In an ideal world, these cars drive at uniform speed without interfering with one another and complete their turns simultaneously, then leave the roundabout after, shall we say, 10~seconds. Now we add another car to this scenario that is waiting to enter the roundabout at some point~$P$ in order to make a full turn as well. If the other cars are not willing to stop and wait, the new car has to wait until everyone else is done (as there are no gaps between cars) and only then take its turn to finish after 20 seconds. If however, every car is willing to shortly leave the roundabout at~$P$ in order to re-enter after only a short stop, the new car can join immediately and everyone is done after only slightly more than 10~seconds.%\noteSk{Ein kleines Bildchen mit Autos dazu w\"are h\"ubsch!}
\definecolor{DarkGray}    {gray}{0.2}
\definecolor{ForestGreen} {cmyk}{0.91,0,0.88,0.12}
\tikzstyle{node}=[draw, circle, color=black, fill=white, minimum size = 0.75cm]
\tikzstyle{invisible}=[color=white, fill=white, text=black]
\tikzstyle{arc}=[ultra thick,->,>=stealth]

\newcommand{\street}[3]{ \draw[DarkGray,line width=6mm] (#1:#3) -- (#1:#2);
    \draw[white,line width=5.5mm] (#1:#3) -- (#1:#2) -- ++(#1:-0.025);
    \draw[DarkGray,line width=5.0mm] (#1:#3) -- (#1:#2) -- ++(#1:-0.05);
    \draw[white,thick,dashed] (#1:#3) -- (#1:#2) -- ++(#1:-0.05);
    }

    \newcommand{\car}{
     \fill[very thin,fill=red,draw=black] (0.04,1.8) rectangle (0.225,2.2);
     \draw[very thin,draw=black,fill=blue] (0.06,1.9) to[out=-20,in=200] (0.205,1.9) -- (0.195,1.95) -- (0.07,1.95) -- (0.06,1.9);
     \draw[very thin,draw=black,fill=blue] (0.06,2.1) to[out=20,in=160] (0.205,2.1) -- (0.195,2.03) -- (0.07,2.03) -- (0.06,2.1);
     \draw[very thin,fill=yellow] (0.05,2.2) -- (0.09,2.2) -- (0.08,2.17) -- (0.06,2.17) -- (0.05,2.2);
     \draw[very thin,fill=yellow] (0.215,2.2) -- (0.175,2.2) -- (0.185,2.17) -- (0.205,2.17) -- (0.215,2.2);
    }
    
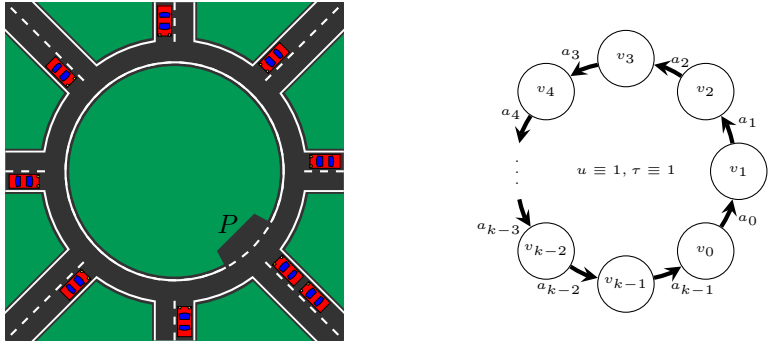
\begin{figure}[tb]
  \centering
  \begin{tikzpicture}[every node/.style={font=\tiny,inner sep=0mm}]
   \begin{scope}[xshift=-6cm,yshift=0cm]
    \fill[ForestGreen] (-2.25,-2.25) rectangle (2.25,2.25);
    \fill[DarkGray] (0,0) circle (1.8cm); 
    \draw[white,thick] (0,0) circle (1.75cm); 
    \draw[white,thick] (0,0) circle (1.45cm);
    \fill[ForestGreen] (0,0) circle (1.4cm); 
    \fill[DarkGray] (-28:1.6) -- (-62:1.6) -- (-62:1.2) -- (-28:1.2) -- (-28:1.6);
    \draw[white,thick,dashed] (0,0) circle (1.45cm);
    \node[] (a) at (-45:1) {\normalsize $P$};
    %\draw[DarkGray,line width=6mm] (2,4.0) -- (2,3.75);
    %\draw[white,line width=5.5mm] (2,4.0) -- (2,3.725);
    %\draw[DarkGray,line width=5.0mm] (2,4.0) -- (2,3.7);
    %\draw[white,thick,dashed] (2,4.0) -- (2,3.7);
    \begin{scope}[]
     \clip (-2.25,-2.25) rectangle (2.25,2.25);
     \street{0}{1.75}{2.25}
     \street{45}{1.75}{3.5}
     \street{90}{1.75}{2.25}
     \street{135}{1.75}{3.5}
     \street{180}{1.75}{2.25}
     \street{225}{1.75}{3.5}
     \street{270}{1.75}{2.25}
     \street{315}{1.75}{3.5}
     \end{scope}
     \begin{scope}[rotate around={0:(0,0)},yshift=-4cm]
      \car
     \end{scope}     
     \begin{scope}[rotate around={45:(0,0)},yshift=-4cm]
      \car
     \end{scope}
     \begin{scope}[rotate around={90:(0,0)},yshift=-4cm]
      \car
     \end{scope}     
     \begin{scope}[rotate around={135:(0,0)},yshift=-4cm]
      \car
     \end{scope}  
     \begin{scope}[rotate around={180:(0,0)},yshift=-4cm]
      \car
     \end{scope}  
     \begin{scope}[rotate around={225:(0,0)},yshift=-4cm]
      \car
     \end{scope}  
     \begin{scope}[rotate around={270:(0,0)},yshift=-4cm]
      \car
     \end{scope}       
     \begin{scope}[rotate around={315:(0,0)},yshift=-4cm]
      \car
     \end{scope}       
     \begin{scope}[rotate around={45:(0,0)},yshift=-4.5cm]
      \car
     \end{scope}       
    %\draw[]
   \end{scope}

  % Nodes
   \def \radius {1.5}
   \node[node] (v1)           at (315:\radius) {$v_0$};
   \node[node] (v2)           at (0:\radius) {$v_1$};
   \node[node] (v3)           at (  45:\radius) {$v_2$};
   \node[node] (v4)           at (  90:\radius) {$v_3$};
   \node[node] (v5)           at (  135:\radius) {$v_4$};
   %\node[node] (v3)           at (  180:\radius) {$v_5$};
	 \node[node,invisible] (v6) at (180:\radius) {}; \node[] (text) at ($(176.5:\radius) + (180:-0.05)$) {$\vdots$};
	 \node[node] (v7)           at (225:\radius) {$v_{k-2}$};
	 \node[node] (v8)           at (270:\radius) {$v_{k-1}$};
	 \node[] (text)             at (0:0) {$u \equiv 1, \tau \equiv 1$};
	 \path (v1) edge[arc, bend right=10] node[auto,swap,xshift=1mm] {$a_0$} (v2); 
	 \path (v2) edge[arc, bend right=10] node[auto,swap,xshift=1mm] {$a_1$} (v3); 
	 \path (v3) edge[arc, bend right=10] node[auto,swap] {$a_2$} (v4); 
	 \path (v4) edge[arc, bend right=10] node[auto,swap,yshift=1mm] {$a_3$} (v5); 
	 \path (v5) edge[arc, bend right=10] node[auto,swap,yshift=1mm] {$a_4$} (v6); 
	 \path (v6) edge[arc, bend right=10] node[auto,swap,yshift=-1mm] {$a_{k-3}$} (v7); 
	 \path (v7) edge[arc, bend right=10] node[auto,swap,yshift=-1mm] {$a_{k-2}$} (v8); 
	 \path (v8) edge[arc, bend right=10] node[auto,swap,yshift=-1mm] {$a_{k-1}$} (v1);
	 %\node[] (v1s) at ($(v1) + (240:0.9)$) {$b^k = 1$};
	 %\node[] (v1d) at ($(v1) + (240:1.5)$) {$b^1 = -2$};
	 %\node[] (v2s) at ($(v2) + (300:0.9)$) {$b^1 = 2$};
	 %\node[] (v2d) at ($(v2) + (300:1.5)$) {$b^2 = -1$};	 
	 %\node[anchor=south west] (v3s) at (v3.east) {$b^2 = 1$};
	 %\node[anchor=north west] (v3d) at (v3.east) {$b^3 = -1$};	
   %\node[] (v5s) at ($(v5) + (120:0.9)$) {$b^{k-1} = 1$};
	 %\node[] (v5d) at ($(v5) + (120:1.5)$) {$b^k = -1$};		
	 %\node[anchor=south east] (v6s) at (v6.west) {$b^{k-1} = 1$};
	 %\node[anchor=north east] (v6d) at (v6.west) {$b^k = -1$};
	\end{tikzpicture}
  \caption{A single lane roundabout traffic (left) and the corresponding network~$G$ (right). All arcs have unit capacities and transit times.\label{figure:network}}
 \end{figure}

Let $G=(V,A)$ be a directed cycle with~$k$ nodes~$V=\set{v_0,v_1,\dots,v_{k-1}}$ and arcs~$a_j=(v_j,v_{(j+1)\bmod k})$, $j=0,\dots,k-1$. All arcs have unit capacities and transit times. There are $k$ commodities~$K=\{0,1,\dots,k-1\}$ with~$s_i=v_i$ and~$t_i=v_{(i-1)\bmod k}$, for~$i=0,\dots,k-1$. Commodity~$0$ has demand~$d_0:=2$ and all remaining commodities have unit demand~$d_i:=1$, $i=1,\dots,k-1$.

%Figure~\ref{figure:network} illustrates such an instance. The idea of such an instance is to have a kind of roundabout traffic, where each flow unit needs to traverse almost the whole circle. If one flow unit starts at every node in the circle, all flow units can move at the same time, as all flow units will keep ahead of each other. However, if we add a second flow unit to a node in circle, either this new flow unit needs to wait until all other flow units are gone, or the other flow units need to wait in order to let the new flow unit ``filter in''. The first solution works without storage, but requires the new flow unit to wait for a long time. The second solution requires storage, and uses it to split the waiting among the flow units, which results in a much smaller total time horizon.  
 %

%In order to determine the benefit of storage for an instance $I_k$, we begin by analyzing the time horizon necessary to fulfill all supplies and demands with storage. 

\begin{lemma}\label{lemma:lowerbound}
If storage of flow at intermediate nodes is permitted, all demands can be sent within time horizon~$k+1$.
\end{lemma}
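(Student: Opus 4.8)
The plan is to exhibit an explicit feasible flow over time with time horizon~$k+1$ in which each commodity travels along the unique directed path from its source to its sink, and the only storage used is at the single node~$v_0$.

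I would describe the flow as follows. Commodity~$0$ enters arc~$a_0$ at rate~$1$ throughout the interval~$[0,2)$ and is then pushed forward at full speed along its path $a_0,a_1,\dots,a_{k-2}$; thus its flow occupies arc~$a_j$ exactly during~$[j,j+2)$ for $j=0,\dots,k-2$, and reaches $t_0=v_{k-1}$ during~$[k-1,k+1)$. For $i\in\{1,\dots,k-1\}$, commodity~$i$ enters arc~$a_i$ at rate~$1$ throughout~$[0,1)$ and is pushed at full speed along $a_i,a_{i+1},\dots,a_{k-1}$, so it reaches~$v_0$ during~$[k-i,k-i+1)$; if $i=1$ it has arrived, since $v_0=t_1$; if $i\ge2$ it waits at~$v_0$ for exactly one time unit and is then pushed at full speed along $a_0,a_1,\dots,a_{i-2}$, reaching $t_i=v_{i-1}$ during~$[k,k+1)$. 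So for $i\ge1$, commodity~$i$ occupies arc~$a_j$ during~$[j-i,\,j-i+1)$ whenever $i\le j\le k-1$ and during~$[k-i+1+j,\,k-i+2+j)$ whenever $0\le j\le i-2$, and it never uses~$a_{i-1}$.

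Then I would verify that this is a flow over time with time horizon~$k+1$. Flow conservation holds because each commodity's flow is a path flow with at most a single holdover, at the intermediate node~$v_0$ (for $i\ge2$), where the amount stored is clearly nonnegative and returns to~$0$; all supplies and demands are met; and every bit of flow enters the last arc of its path before time~$k=T-\tau_a$, so it has reached its sink by time~$T=k+1$. The only real work is the capacity constraint~\eqref{eq:capacities}. Fixing an arc~$a_j$, I would split the commodities that route through it into the ``pre-$v_0$'' commodities $i\in\{1,\dots,j\}$, commodity~$0$ (present iff $j\le k-2$), and the ``post-$v_0$'' commodities $i\in\{j+2,\dots,k-1\}$; commodity~$j+1$ does not use~$a_j$. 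Reading off the occupation intervals above, the pre-$v_0$ commodities occupy~$a_j$ in consecutive unit intervals tiling~$[0,j)$, commodity~$0$ occupies it during~$[j,j+2)$, and the post-$v_0$ commodities occupy it in consecutive unit intervals tiling~$[j+2,k)$. These three time ranges are pairwise disjoint, and on each of them the total flow rate on~$a_j$ is at most~$1$; hence $\sum_{i\in K}f^i_{a_j}(\theta)\le u_{a_j}=1$ for all~$\theta$. (For $j=k-1$ there is no commodity~$0$ and no post-$v_0$ commodity, and the pre-$v_0$ commodities tile~$[0,k-1)$.) This proves the lemma.

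The one obstacle I anticipate is the bookkeeping in the capacity argument: checking that the per-commodity occupation intervals on a fixed arc really tile the three blocks $[0,j)$, $[j,j+2)$, $[j+2,k)$ without overlap, and respecting the tight deadline that flow must enter the last arc of its path by time~$k$ (rather than~$k+1$) — which is met with no slack by commodity~$0$, whose flow enters~$a_{k-2}$ throughout~$[k-2,k)$. Everything else is a routine check.
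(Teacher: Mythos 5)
Your construction is exactly the flow over time used in the paper's proof: commodity~$0$ at rate~$1$ on $[0,2)$ without waiting, commodities $1,\dots,k-1$ at rate~$1$ on $[0,1)$ with a one-unit holdover at~$v_0$ for $i\ge2$. The proposal is correct and simply carries out the capacity check (which the paper states more briefly) in explicit detail, so it is essentially the same proof.
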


\begin{proof}
The following multi-commodity flow over time~$f$ sends all demands within time horizon~$k+1$. 
\begin{itemize}
\item For commodity~$0$, we send flow at rate~$1$ into the path $s_0=v_0\to v_1\to\dots\to v_{k-1}=t_0$ during the time interval $[0,2)$. This flow does not wait at any node and thus arrives at its sink~$t_0$ before time $2+(k-1)=k+1$.
\item For commodities $i=1,\dots,k-1$, we send flow at rate~$1$ into the path $s_i=v_i\to\dots\to v_{k-1}\to v_0\to\dots\to v_{i-1}=t_i$ during the time interval $[0,1)$. For~$i>1$, flow of commodity~$i$ only waits at node~$v_0$ for one time unit before proceeding towards~$v_1$. These flows thus arrive at their sinks before time $1+(k-1)+1=k+1$.
\end{itemize}	
By definition, the flow over time~$f$ sends all demands and satisfies flow conservation constraints. Thus, we only need to check the capacity constraints. By our choice of flow rates, capacity violations can only occur if two commodities send flow into an arc at the same time. This is prevented by the waiting of commodities $2,\dots,k-1$ in $v_0$, thus completing the proof.
\end{proof}
 
It remains to lower bound the time horizon necessary to send all demands without storage. 

\begin{lemma}
\label{lemma:upperbound}
If storage of flow at intermediate nodes is forbidden, the minimal time horizon necessary to send all demands is larger than~$2k-2$.
\end{lemma}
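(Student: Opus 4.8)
The plan is to assume, for contradiction, that there is a feasible flow over time without intermediate storage with time horizon $T=2k-2$, and to derive the false inequality $k+1\le k$. It is enough to handle the horizon exactly $2k-2$: a flow feasible within a smaller horizon stays feasible after extending the horizon to $2k-2$, so ruling out $2k-2$ shows that the minimal horizon without storage is strictly larger. Fix such a hypothetical flow $f$ and, for each commodity $i\in K$, write $G_i := f^i_{a_i}$ for the rate at which commodity $i$ enters its first arc $a_i=(s_i,v_{(i+1)\bmod k})$.

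First I would extract two facts from flow conservation. Commodity $i$ travels from $s_i=v_i$ to $t_i=v_{(i-1)\bmod k}$ along the $k-1$ arcs $a_i,a_{(i+1)\bmod k},\dots,a_{(i-2)\bmod k}$, all of whose interior nodes obey strict flow conservation. Telescoping these tight equalities~\eqref{eq:weak-flow-conservation} along the path (in cumulative form, then differentiating) gives, for almost all $\theta$,
\[ f^i_{a_j}(\theta)\;=\;G_i\bigl(\theta-((j-i)\bmod k)\bigr)\qquad\text{whenever }j\not\equiv i-1\pmod k, \]
so on every arc it uses, commodity $i$'s flow rate is $G_i$ shifted by the number of steps needed to reach that arc. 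Substituting the case $j\equiv i-2$ into the demand constraint~\eqref{eq:fulfill-demands} at $t_i$, whose only in-arc is $a_{(i-2)\bmod k}$ and only out-arc is $a_{(i-1)\bmod k}$, yields $\int_0^{k-1}G_i-\int_0^T f^i_{a_{(i-1)\bmod k}}=d_i$; since the subtracted integral is nonnegative, $\int_0^{k-1}G_i\ge d_i$.

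The heart of the proof is a short double counting over the $k$ arcs. Exactly one commodity, namely $(j+1)\bmod k$, does not use $a_j$; dropping its nonnegative contribution and inserting the time-shift identity, the capacity constraint~\eqref{eq:capacities} for $a_j$ gives, for almost all $\theta$, $\sum_i G_i(\theta-((j-i)\bmod k))\le 1$, where $i$ runs over the $k-1$ commodities using $a_j$. I would integrate this over $\theta\in[k-2,k-1)$; since the shift $(j-i)\bmod k$ runs exactly over $\set{0,1,\dots,k-2}$ as $i$ runs over those commodities, writing $G_i^{[m]}:=\int_m^{m+1}G_i$ turns the integrated inequality into $\sum_{m=0}^{k-2}G_{(j+2+m)\bmod k}^{[m]}\le 1$. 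Summing over $j\in\set{0,\dots,k-1}$ and grouping by $m$, for each fixed $m$ the translation $j\mapsto(j+2+m)\bmod k$ is a bijection of $\Z_k$, so
\[ k\;\ge\;\sum_{j=0}^{k-1}\sum_{m=0}^{k-2}G_{(j+2+m)\bmod k}^{[m]}\;=\;\sum_{m=0}^{k-2}\sum_{i\in K}G_i^{[m]}\;=\;\sum_{i\in K}\int_0^{k-1}G_i\;\ge\;\sum_{i\in K}d_i\;=\;k+1, \]
using $\int_0^{k-1}G_i\ge d_i$ together with $d_0=2$ and $d_i=1$ for $i\ge 1$. The contradiction $k\ge k+1$ proves the lemma.

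The difficulty I anticipate is purely bookkeeping on the cycle $\Z_k$: keeping straight which single commodity is absent from each arc, how many steps commodity $i$ needs to reach $a_j$, and checking that for each fixed $m$ the translation $j\mapsto(j+2+m)\bmod k$ is a bijection, so the double count is exact. One also has to apply~\eqref{eq:weak-flow-conservation} and~\eqref{eq:fulfill-demands} in integrated form, so the time-shift identity is only claimed almost everywhere — harmless, since it is used only inside integrals. A pleasant feature of this route is that it needs no structural lemma that flow travels along simple paths or never circulates around the cycle: the time-shift identity and the bound $\int_0^{k-1}G_i\ge d_i$ hold regardless, and circulating flow only contributes nonnegative terms that are discarded.
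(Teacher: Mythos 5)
Your proof is correct and follows essentially the same route as the paper: both derive the contradiction $k+1\le k$ by comparing the total demand that must cross the arcs during the single time window $[k-2,k-1)$ against the total capacity $k$ available there. The only difference is presentational --- where the paper argues informally that each unit of flow, traversing $k-1$ unit-length arcs without waiting inside horizon $2k-2$, must enter exactly one arc during $[k-2,k-1)$, you formalize the same count via the time-shift identity $f^i_{a_j}(\theta)=G_i\bigl(\theta-((j-i)\bmod k)\bigr)$, the bound $\int_0^{k-1}G_i\ge d_i$, and an explicit reindexing, which is a rigorous rendering of the paper's key step.
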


\begin{proof}
We assume by contradiction that there is a flow over time~$f$ with time horizon~$T=2k-2$ that routes all demands without storage at intermediate nodes. Since the flow paths of all commodities have transit time~$k-1$, flow of any commodity must leave its source before time~$k-1$ and cannot reach its sink before time~$k-1$. In particular, flow cannot enter the last arc on its path before time~$k-2$. As a consequence, since flow may not wait at intermediate nodes and arc transit times are unit, all flow must be sent into some arc on its path within the time interval~$[k-2,k-1)$, that is,
\begin{align*}
\sum_{j=0}^{k-1}\sum_{i\in K}\int_{k-2}^{k-1}f^i_{a_j}(\xi)\ d\xi \geq \sum_{i\in K}d_i = k+1\enspace.
\end{align*} 
On the other hand, by capacity constraints, $\sum_{i\in K}f^i_{a_j}(\xi)\leq u_{a_j}=1$ such that
\begin{align*}
\sum_{j=0}^{k-1}\sum_{i\in K}\int_{k-2}^{k-1}f^i_{a_j}(\xi)\ d\xi \leq k\enspace.
\end{align*}
This contradiction concludes the proof.
\end{proof}
 
Together, Lemma~\ref{lemma:lowerbound} and Lemma~\ref{lemma:upperbound} yield the following theorem.
 
\begin{theorem}
There is a family of quickest multi-commodity flow instances for which the ratio between the optimal time horizon without and with intermediate storage, respectively, converges to~$2$.
\end{theorem}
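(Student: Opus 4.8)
The plan is to combine the two preceding lemmas directly on the family of cycle instances~$G$ parameterized by the number of nodes~$k$. First I would fix such an instance for a given~$k$ and let~$T_{\text{store}}$ and~$T_{\text{nostore}}$ denote the optimal time horizon of the quickest multi-commodity flow problem with and without intermediate storage, respectively. Lemma~\ref{lemma:lowerbound} exhibits a feasible flow over time with storage and time horizon~$k+1$, hence~$T_{\text{store}} \le k+1$. Lemma~\ref{lemma:upperbound} states that no feasible flow over time without storage achieves time horizon~$2k-2$; since shrinking the time horizon only makes the problem harder, this yields~$T_{\text{nostore}} > 2k-2$.

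Next I would put these two bounds together. For the instance with parameter~$k$ the speed-up ratio through storage satisfies
\[
\frac{T_{\text{nostore}}}{T_{\text{store}}} \;>\; \frac{2k-2}{k+1} \;=\; 2 - \frac{4}{k+1}\enspace.
\]
Letting~$k \to \infty$ gives a sequence of instances whose speed-up ratios are bounded below by~$2 - 4/(k+1)$, and this lower bound tends to~$2$. Together with the upper bound of~$2$ on the speed-up through storage due to Fleischer and Skutella~\cite{FleischerSkutella02,FleischerSkutella07}, the ratio converges to~$2$ (from below) along this family, which is precisely the assertion of the theorem.

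I do not expect a genuine obstacle here, since all the substance is already contained in Lemmas~\ref{lemma:lowerbound} and~\ref{lemma:upperbound}; the only point worth a moment's care is that the strict inequality in Lemma~\ref{lemma:upperbound} is what lets us say~$T_{\text{nostore}} > 2k-2$, although even the weaker estimate~$T_{\text{nostore}} \ge 2k-2$ would already suffice for the convergence claim. If desired, one may additionally note that the instances have size polynomial in~$k$, so this is a bona fide family of instances in the usual sense, even though the statement of the theorem does not require it.
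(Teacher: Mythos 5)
Your proposal is correct and matches the paper's intent exactly: the paper states that Lemmas~\ref{lemma:lowerbound} and~\ref{lemma:upperbound} together yield the theorem, and your computation $(2k-2)/(k+1) = 2 - 4/(k+1) \to 2$, combined with the known upper bound of~$2$, is precisely the intended argument. No gaps.
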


We mention another interesting consequence of the family of instances discussed above. If we reduce the demand of commodity~$0$ to~$d_0:=1$, then there is a quickest flow with time horizon~$k$ which does not use storage at intermediate nodes. However, as soon as we increase~$d_0$ to~$1+\varepsilon$ for any~$\varepsilon>0$, the same arguments as in the proof of Lemma~\ref{lemma:upperbound} show that, if storage is prohibited, the optimal time horizon jumps above~$2k-2$. Thus, in contrast to the setting where storage is permitted (see Lemma~4.8 in~\cite{FleischerSkutella07}), the optimal time horizon is no longer a continuous function of the positive demand values if storage is prohibited.

%\section*{References}
\bibliographystyle{elsarticle-num} 
\bibliography{literature} 

\begin{thebibliography}{10}
\expandafter\ifx\csname url\endcsname\relax
  \def\url#1{\texttt{#1}}\fi
\expandafter\ifx\csname urlprefix\endcsname\relax\def\urlprefix{URL }\fi
\expandafter\ifx\csname href\endcsname\relax
  \def\href#1#2{#2} \def\path#1{#1}\fi

\bibitem{RoyThomlin07}
K.~Roy, C.~J. Tomlin, Solving the aircraft routing problem using network flow
  algorithms, in: Proceedings of the 2007 American Control Conference (ACC
  '07), 2007, pp. 3330--3335.
\newblock \href {http://dx.doi.org/10.1109/ACC.2007.4282854}
  {\path{doi:10.1109/ACC.2007.4282854}}.

\bibitem{AgapiEtAl09}
A.~Agapi, S.~Soudan, M.~Pasin, P.~V.-B. Primet, T.~Kielmann, Optimizing
  deadline-driven bulk data transfers in overlay networks, in: Proceedings of
  18th Internatonal Conference on Computer Communications and Networks (ICCCN
  '09), 2009, pp. 1--8.
\newblock \href {http://dx.doi.org/10.1109/ICCCN.2009.5235267}
  {\path{doi:10.1109/ICCCN.2009.5235267}}.

\bibitem{SoudanPrimet09}
S.~Soudan, P.~V.-B. Primet, Mixing malleable and rigid bandwidth requests for
  optimizing network provisioning, in: Proceedings of the 21st International
  Teletraffic Congress (ITC '09), 2009, pp. 1--8.

\bibitem{KoehlerMoehringSkut09}
E.~K{\"o}hler, R.~H. M{\"o}hring, M.~Skutella, Traffic networks and flows over
  time, in: J.~Lerner, D.~Wagner, K.~A. Zweig (Eds.), Algorithmics of Large and
  Complex Networks: Design, Analysis, and Simulation, Vol. 5515 of Lecture
  Notes in Computer Science, Springer, 2009, pp. 166--196.

\bibitem{HanGuanShi11}
Y.~Han, X.~Guan, L.~Shi, Optimization based method for supply location
  selection and routing in large-scale emergency material delivery, IEEE
  Transactions on Automation Science and Engineering 8 (2011) 683--693.
\newblock \href {http://dx.doi.org/10.1109/TASE.2011.2159838}
  {\path{doi:10.1109/TASE.2011.2159838}}.

\bibitem{OsmanEtAl09}
M.~S. Osman, B.~Ram, J.~Bhadury, P.~Stanfield, L.~Davis, F.~Samanlioglu,
  Optimization model for distributed routing for disaster area logistics, in:
  Proceedings of the 5th IEEE International Conference on Service Operations,
  Logistics and Informatics (SOLI '09), 2009, pp. 278--283.
\newblock \href {http://dx.doi.org/10.1109/SOLI.2009.5203944}
  {\path{doi:10.1109/SOLI.2009.5203944}}.

\bibitem{LahmarAssavapokeeArdekani06}
M.~Lahmar, T.~Assavapokee, S.~A. Ardekani, A dynamic transportation planning
  support system for hurricane evacuation, in: Proceedings of the 9th IEEE
  Intelligent Transportation Systems Conference (ITSC '06), 2006, pp. 612--617.
\newblock \href {http://dx.doi.org/10.1109/ITSC.2006.1706809}
  {\path{doi:10.1109/ITSC.2006.1706809}}.

\bibitem{AndreattaEtAl09}
G.~Andreatta, L.~De~Giovanni, G.~Salamaso, Fleet quickest routing on grids: A
  polynomial algorithm, International Journal of Pure and Applied Mathematics
  62 (2010) 419--432.

\bibitem{HernandezPeetaKalafatas11}
S.~Hern\'andez, S.~Peeta, G.~Kalafatas, A less-than-truckload carrier
  collaboration planning problem under dynamic capacities, Transportation
  Research Part E: Logistics and Transportation Review 47 (2011) 933--946.
\newblock \href {http://dx.doi.org/10.1016/j.tre.2011.03.001}
  {\path{doi:10.1016/j.tre.2011.03.001}}.

\bibitem{PeisWiese11}
B.~Peis, A.~Wiese, Universal packet routing with arbitrary bandwidths and
  transit times, in: O.~G\"unl\"uk, G.~J. Woeginger (Eds.), Proceedings of the
  14th Conference on Integer Programming and Combinatoral Optimization (IPCO
  '10), Vol. 6655 of Lecture Notes in Computer Science, Springer, 2011, pp.
  362--375.
\newblock \href {http://dx.doi.org/10.1007/978-3-642-20807-2_29}
  {\path{doi:10.1007/978-3-642-20807-2_29}}.

\bibitem{BraunWinter09}
M.~Braun, S.~Winter, Ad-hoc solution of the multicommodity-flow-over-time
  problem, IEEE Transactions on Intelligent Transportation Systems 10 (2009)
  658--667.
\newblock \href {http://dx.doi.org/10.1109/TITS.2009.2026443}
  {\path{doi:10.1109/TITS.2009.2026443}}.

\bibitem{Skutella09}
M.~Skutella, An introduction to network flows over time, in: Research Trends in
  Combinatorial Optimization, Springer, 2009, pp. 451--482.
\newblock \href {http://dx.doi.org/10.1007/978-3-540-76796-1_21}
  {\path{doi:10.1007/978-3-540-76796-1_21}}.

\bibitem{PhDThesisGross14}
M.~Gro\ss{}, Approximation algorithms for complex network flow over time
  problems, Ph.D. thesis, TU Berlin (2014).

\bibitem{FordFulkerson58}
L.~R. Ford, D.~R. Fulkerson, Constructing maximal dynamic flows from static
  flows, Operations Research 6 (1958) 419--433.

\bibitem{BurkardDlaskaKlinz93}
R.~E. Burkard, K.~Dlaska, B.~Klinz, The quickest flow problem, Zeitschrift
  f{\"u}r Operations Research -- Methods and Models of Operations Research 37
  (1993) 31--58.

\bibitem{HallHipplerSkutella07}
A.~Hall, S.~Hippler, M.~Skutella, Multicommodity flows over time: Efficient
  algorithms and complexity, Theoretical Computer Science 379 (2007) 387--404.

\bibitem{FleischerSkutella02}
L.~Fleischer, M.~Skutella, The quickest multicommodity flow problem, in: W.~J.
  Cook, A.~S. Schulz (Eds.), Proceedings of the 9th Conference on Integer
  Programming and Combinatorial Optimization (IPCO '02), Vol. 2337 of Lecture
  Notes in Computer Science, Springer, 2002, pp. 36--53.
\newblock \href {http://dx.doi.org/10.1007/3-540-47867-1_4}
  {\path{doi:10.1007/3-540-47867-1_4}}.

\bibitem{FleischerSkutella07}
L.~Fleischer, M.~Skutella, Quickest flows over time, SIAM Journal on Computing
  36 (2007) 1600--1630.

\end{thebibliography}
\end{document}